\newcommand{\xtic}[1]{(#1,0) -- ++(0,-0.25)}
\newcommand{\Geomv}{\mathsf{Geom}}
\DeclareMathOperator*{\argmin}{arg\,min}
\newtheorem{theorem}{Theorem}
\newtheorem{definition}{Definition}
\newtheorem{lemma}{Lemma}
\def\mbf{\mathbf}
\definecolor{OI-vermillion}{RGB}{213,94,0}
\title{Timely Offloading in Mobile Edge Cloud Systems}
\author{%
 \IEEEauthorblockN{Nitya Sathyavageeswaran, Roy D. Yates, Anand D. Sarwate, Narayan Mandayam}
 \IEEEauthorblockA{Rutgers, The State University of New Jersey \\
                   Email: \{nitya.s, anand.sarwate\}@rutgers.edu, \{ryates, narayan\}@winlab.rutgers.edu}
}
\begin{document}
\maketitle

\begin{abstract}
Future real-time applications like smart cities will use complex Machine Learning (ML) models for a variety of tasks. Timely status information is required for these applications to be reliable. Offloading computation to a mobile edge cloud (MEC) can reduce the completion time of these tasks. However, using the MEC may come at a cost such as related to use of a cloud service or privacy. In this paper, we consider a source that generates time-stamped status updates for delivery to a monitor after processing by the mobile device or MEC. We study how a scheduler must forward these updates to achieve timely updates at the monitor but also limit MEC usage. We measure timeliness at the monitor using the age of information (AoI) metric. We formulate this problem as an infinite horizon Markov decision process (MDP) with an average cost criterion. We prove that an optimal scheduling policy has an age-threshold structure that depends on how long an update has been in service. 
\end{abstract}



\section{Introduction}

The Mobile Edge Cloud (MEC) is an emerging paradigm for computing in mobile environments. Mobile devices may be able to offload computation to a physically close edge cloud server for processing. This can help enable future real-time applications in smart cities~\cite{SmartCity}, autonomous vehicles~\cite{maqueda2018event}, healthcare~\cite{miotto2018deep}, and industrial monitoring~\cite{nasir2021review} which use complex machine learning (ML) models. For example, autonomous vehicles run ML algorithms to identify pedestrians, traffic signs, objects and other vehicles, and  predict their behaviour to provide a safe driving experience. Low latency is required for these applications so computing the output of the ML model should be done in a timely manner. 

A ML model may compute an output more quickly for some inputs than for others. A mobile device therefore has a choice of whether to offload the evaluation of the model to a MEC or to process it locally. We propose a model for this decision process in which a source (the device) sends time-stamped status updates (the ML model outputs) a monitor. We use the age of information (AoI) metric~\cite{kaul2012real} to measure the timeliness of these updates. The device, or \emph{scheduler} can choose to process the update locally or offload it, at a cost, to the MEC. Using the MEC can reduce the AoI, yielding a timeliness/cost tradeoff.


The scheduling problem in wireless systems from an AoI, privacy, and energy perspective has been extensively studied. Minimizing the long-run average age from a scheduling perspective for multiple users has been studied~\cite{ kadota2018optimizing, sun2018age, Hsu2020Scheduling}. 
Finding optimal policies to minimize age using Markov decision theory has been investigated~\cite{hsu2018age, CeranHARQ, tang2020minimizing, bedewy2021optimal}. From a privacy perspective, He et~al.~\cite{he2017privacy} address the privacy concerns associated with using the MEC and provide privacy aware-task scheduling algorithms in MEC systems. Min et~al.~\cite{min2018learning} propose offloading schemes to preserve privacy and save energy consumption of IoT devices. Task offloading with an energy constraint has been studied~\cite{mao2016dynamic, xu2016online}.  There has also been a lot of work to find optimal policies to minimize AoI subject to an energy constraint under different system settings~\cite{RoyLazyisTimely, BacinogluAgeEnergy, WuAgeEnergy, FengAgeEnergy, Abd-ElmagidAgeEnergy, stamatakis2019control}.

In this work, we study the tradeoff between the average age at the monitor and the frequency of MEC usage. We consider a generate-at-will source~\cite{RoyLazyisTimely} that generates time-stamped system updates which are sent to the monitor after processing by either the local server or MEC. 
We formulate the problem as a Markov decision process (MDP) with an average cost criterion, and prove that an optimal policy has an age-threshold structure depending on the amount of time an update has been in service. We also do simulations to compare the optimal policy to a few heuristic polices.

\section{System Model and Problem Formulation}

\noindent \textbf{Notation.} Random variables will be denoted by capital letters and realizations by lower case letters. 
Let  $\Geomv(\mu)$ denote the geometric distribution with mean $\mu$: if $X \sim \Geomv(\mu)$ then $\prob{X = k} = \mu (1- \mu)^{k-1}$, $k=1,2,\ldots$.

We study a discrete time system consisting of a source, scheduler, local server, mobile edge cloud (MEC) and a monitor as shown in Figure \ref{fig:sysmodel1}. 
Time passes in integer slots with slot $n\ge0$ denoting the time interval $[n,n+1)$. Within a slot $n$, the following events happen in order:
    \begin{enumerate}
        \item The source may or may not generate an update. The variable $x_n = 1$ if there is an update in slot $n$ and $x_n = 0$ otherwise.
        \item The scheduler observes $x_n$. If $x_n = 1$, the scheduler decides whether to forward the update to the local server or the MEC. If the update is sent to the local server, it enters a first-in-first-out (FIFO) queue to await service. If the update is sent to the MEC, it enters service at the MEC and is serviced immediately.
        \item The local server processes the head-of-line update in its queue according to its service distribution. 
        \item If the MEC or local server have finished servicing an update, it is sent to the monitor.
        \item The monitor may or may not request a new update from the source.
    \end{enumerate}
In this paper, we study a specific instance of this more general setup. We consider a source that has information about the service facility state. To avoid queuing delays, we consider a generate-at-will source that can submit a fresh update only after the previous update is serviced and received at the monitor. The local server has service distribution $\Geomv(\mu)$ and the MEC services an update in $1$ time slot. We assume that the transmission time to submit a packet to the MEC is much smaller than the computational time at the MEC, and is thus neglected.

\begin{figure}
    \centering
    \includegraphics[width=0.4\textwidth]{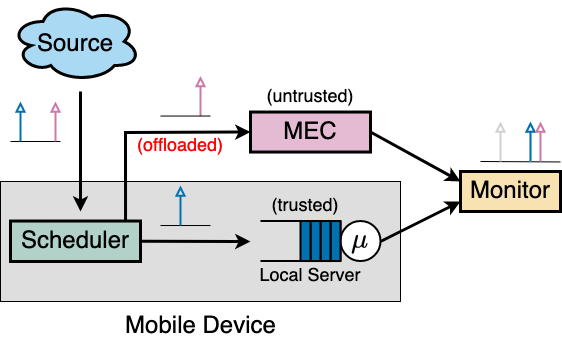}
    \caption{System model for offloading: Source generates status updates to a monitor after processing by the local server or MEC.}
    \label{fig:sysmodel1}
\end{figure}

\subsection{Scheduler-controlled source}

We allow the scheduler to abort local processing and ``pull'' a fresh update from the source at the start of the time slot. Thus, if an update is being processed by the local server and has been in service for a while, the scheduler can terminate/abort the local processing and instead pull a fresh update and sent it to the MEC. We model the scheduler's actions over time as a sequence $\mbf{u} = (u_1, u_2, \ldots)$, where 
    \begin{align}
    u_n&=\begin{cases}
        1 &\text{fresh update is sent to MEC}\\
        0 &\text{last update is processed by the local server}.
        \eqnlabel{actionscheduler}
    \end{cases}
    \end{align}
Once the local server or MEC finishes servicing an update, it is sent to the monitor.


\subsection{Problem Formulation}

We use AoI as a metric for timeliness. The AoI is defined as the amount of time that has elapsed since the generation of the most recent update that has been received at the monitor. Let $R_n = \max \{ k \le n : \text{update $x_k = 1$ received at monitor} \}$ so $A_n = n - R_n$ is the age at the monitor.
Because we will model the time for local processing as a random variable, the update times and age at the monitor are random processes. 


Our goal is to minimize the average age, defined as\footnote{Actions occur slot by slot in discrete time but age is averaged continuously over the slot; when $A_n=a$, the average age over slot $n$ is $a+1/2$.}
\begin{align}
    \Delta = \limty{N} \frac{1}{N}\sum_{n=1}^{N} (A_n +1/2) \eqnlabel{sum-age}.
    \end{align}
To disincentivize offloading every update, we impose a price for using the MEC. Since $u_n = 1$ when the scheduler offloads, the frequency of using the MEC is
	\begin{align}
	\
	\bar{p} = \lim_{N \to \infty} \frac{1}{N} \sum_{n=1}^{N} U_n.
	\label{eq:avgMECuse}
	\end{align}
This price can represent a number of different factors: a privacy loss for revealing the update to the MEC, the bandwidth/energy for transmitting the update to the MEC, or a monetary cost if the MEC is owned by an external service.

Given these ingredients we can formulate our problem as a Markov decision process (MDP)~\cite{puterman1994markov}. If $A_n = a_n$ is the age at the monitor at the end of slot $n$ and $z_n$ is the amount of service an update has received in slot $n$, the age $a_{n+1}$ and $z_{n+1}$ in slot $n+1$ can be computed as a function of $a_n$ and $z_n$. We therefore define the state of the system as $s_n=(a_n,z_n)$. The MDP  consists of a tuple ($\mathcal S$, $\mathcal{U}$, $P$, $C$): $s=(a,z)\in \mathcal{S}$, where $a\in \mathbb{N}$ and $z\in \mathbb{N}_0$ are the countably infinite set of states.
$u\in \mathcal{U}$ is the action space defined in \eqnref{actionscheduler}. The transition function $P$ governing the probability of the transition $s_{n} \to s_{n+1}$ is:
\begin{align}
    s_{n+1}=\begin{cases}
        (1,0) & \text{w.p. $1$ if $u_n = 1$} \\
        (z_n+1,0)&\text{w.p. $\mu$ if $u_n = 0$} \\
        (a_n+1, z_n+1) &\text{w.p. $1-\mu$ if $u_n = 0$}
    \end{cases}
    \eqnlabel{state:transition}.
\end{align}

We model the cost $C$ of taking action $u$ in state $s$ as the sum of the age at the monitor and the MEC cost:
\begin{align}
C(s_n,u_n) = a_n +1/2 + \lambda p_n,
\end{align}
where $\lambda \in (0,\infty)$ is a relative weight. 
A policy $f$ is a sequence of maps $\{ f_n \colon n \in \mathbb{N} \}$, where $f_n \colon (a_n,z_n)\to u_n$. 
We want to find a policy $f^*$ that minimizes the long term average cost: 
\begin{align}
        V_{f}(s)= \limsup \limits_{N\to \infty}\frac{1}{N}\Eop_{f}\bracket{ \sum_{n=1}^{N}  C(S_n,U_n) \middle\vert 
        S_0 = s  }. \eqnlabel{avgcost}
\end{align}



\section{Heuristic Policies}

We first analyze a few heuristic policies to understand the age/cost tradeoff.

\subsection{Local Server Only and MEC Only }

If we ignore the MEC and use only the local server, $\bar{p} = 0$. This is a discrete time, geometric service zero-wait policy. Using the discrete time version of the age analysis by Yates~\cite{RoyLazyisTimely}, the average age at the monitor equals
\begin{align}
    \Delta&= \frac{4-\mu}{2\mu}\eqnlabel{zerowaitgeo}.
\end{align}
In this policy, as the service rate $\mu$ approaches zero, the age at the monitor blows up. We thus need to look at other policies which use the MEC to reduce the age at the monitor. 

A policy that only uses the MEC to process the updates is the same using the local server when $\mu=1$. Hence it follows from  \eqnref{zerowaitgeo} that 
the average age is $\Delta = 1.5$. For this policy $\bar{p} = 1$, so while it achieves the lowest possible age at the monitor, it may be undesirable if the MEC is costly.


\subsection{Service Threshold Policy}

This policy is defined by the service time threshold $z^*$. If the amount of time an update has spent in service at the start of the slot $n$ is $z_n= z^*$, 
then the existing update in service is dropped and the source immediately generates a new update for the MEC to process. We find the average age at the monitor for this policy using the graphical approach
~\cite{kaul2012real}. Figure \ref{SampleServiceTimeThreshold} shows a sample variation of the age process at the monitor for the service time threshold policy. The source submits updates at times $U_1$, $U_2$,$\ldots$, $U_m$. The age increases linearly with rate $1$ in the absence of an  update at the monitor and drops to a smaller value $Y_i$ when an update finishes service and is received at the monitor. $Y_i$ is the age at the monitor when an update completes service and is defined as follows:
\begin{align}
    Y_i&=\begin{cases}
        Z_{i-1} & \text{if }Z_{i-1}\le z^*\\
        1 & \text{if } Z_{i-1}>z^*.
    \end{cases}
\end{align}

\begin{figure}[t]
\centering
\begin{tikzpicture}[scale=0.2]
\draw [fill=lightgray, ultra thin, dashed] (1,0) to (1,2) to (10,11) to (10,0) ;
\draw [fill=lightgray, ultra thin, dashed] (10,0) to (10,1) to (14,5) to (14,0) ;
\draw [fill=lightgray, ultra thin, dashed] (22,0) to (22,2) to (26,6) to (26,0) ;
\draw [fill=lightgray, ultra thin, dashed] (26,0) to (26,4) to (31,9) to (31,0) ;
\draw [<-|] (-2,16) node [above] {$\Delta(n)$} -- (-2,0) -- (16,0);
\draw [|->] (18,0) -- (36,0) node [right] {$n$};
\draw [thin]\xtic{-1}\xtic{0}\xtic{1}\xtic{2}\xtic{3}\xtic{4}\xtic{5}\xtic{6}\xtic{7}
\xtic{8}\xtic{9}\xtic{10}\xtic{11}\xtic{12}\xtic{13}\xtic{14}\xtic{19}\xtic{20}\xtic{21}\xtic{22}\xtic{23}\xtic{24}
\xtic{25}\xtic{26}\xtic{27}\xtic{28}\xtic{29}\xtic{30}\xtic{31}\xtic{32}\xtic{33}\xtic{34}; 

\draw [<->] (1,1.8) -- node [right] {$Y_1$} (1,0.4);
\draw [<->] (26,3.8) -- node [right] {$Y_m$} (26,0.4);
\draw (-0.5,1.5) node {$\Tilde{Q}_0$};
\draw[<-] (5.5,1.5) to [out=110,in=250] (5.5,11) node [above] {$Q_1$};
\draw[<-] (12,1.5) to [out=110,in=250] (12,11) node [above] {$Q_2$};
\draw[<-] (24,1.5) to [out=110,in=250] (24,11) node [above]{$Q_{m-1}$};
\draw[<-] (29.5,3.5) to [out=110,in=250] (29.5,11) node [above]{$Q_{m}$};
\draw [very thick] (-2,3) -- (1,6) -- (1,2) -- (10,11) -- (10,1)  -- (14,5) -- (14,4) -- (16,6);
\draw [very thick] (22,2)  -- (26,6) -- (26,4) -- (31,9) -- (31,3) -- (33,5); 

\draw  (1,0) node {$\bullet$} node [below] {$U_1$};
\draw  (10,0) node {$\bullet$} node [below] {$U_2$};
\draw  (14,0)node {$\bullet$} node [below] {$U_3$};
\draw  (22,0) node {$\bullet$} node [below] {$U_{m-1}$};
\draw  (26,0)  node {$\bullet$} node [below] {$U_m$};
\draw [thin] (31,0.4) -- (31,-0.4);
\draw  (31,0) node [below] {${\mathcal{T}_m}$};
\draw  [|<->|] (1,-3) to node [below] {$z^*+1$} (10,-3);
\draw  [|<->|] (10,-3) to node [below] {$Z_2=4$} (14,-3);
\draw  [|<->|] (22,-3) to node [below] {$Z_{m-1}$} (26,-3);
\draw  [|<->|] (26,-3) to node [below] {$Z_m$} (31,-3);
\draw  [|<->|] (1,-6) to node [below] {$\hat{S}_1=9$} (10,-6);
\draw  [|<->|] (10,-6) to node [below] {$\hat{S}_2=4$} (14,-6);
\draw  [|<->|] (22,-6) to node [below] {$\hat{S}_{m-1}$} (26,-6);
\draw  [|<->|] (26,-6) to node [below] {$\hat{S}_m$} (31,-6);
\end{tikzpicture}
	\caption{Sample variation of the age process at the monitor with $z^*=8$.} 
\label{SampleServiceTimeThreshold}
\end{figure}
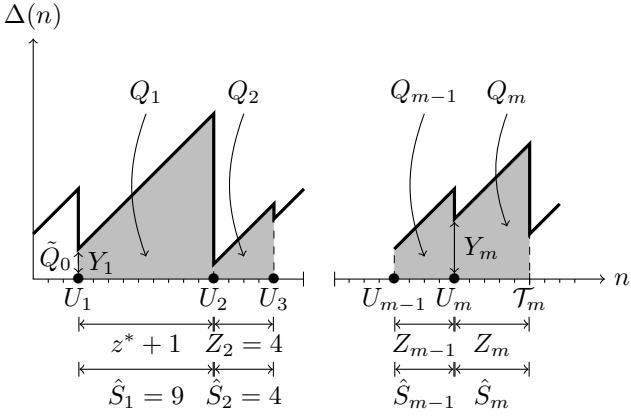
 Let $Z_i$ denote the service times for the local server. Since either the local server or the MEC can service an update, we define the effective service times $\hat{S_i}$ as follows:
\begin{align}
    \hat{S_i}&=\begin{cases}
        Z_i & \text{if }Z_i\le z^*\\
        z^*+1 &\text{if } Z_i>z^*.
    \end{cases} \eqnlabel{effservice}
\end{align}
Let $m$ denote the number of updates that finish service in time $\mathcal{T}_m$.
The average age at the monitor is the area under the saw tooth function in Figure \ref{SampleServiceTimeThreshold}
normalized by the total time of observation $\mathcal{T}_m$ given by
\begin{align}
    \mathcal{T}_m&= \hat{S}_1+ \hat{S}_2+\ldots+\hat{S}_m \eqnlabel{Tmsum}.
\end{align}The average age at the monitor is given by 
\begin{align}
    \Delta&= \frac{\E{Q_i}}{\Esmall{\hat{S}_i}}=\frac{\Esmall{Y_i\hat{S}_i}+\frac{1}{2}\Esmall{{\hat{S}_i}^2}}{\Esmall{\hat{S}_i}} \eqnlabel{areacal}.
\end{align}
Since $Y_i$ and $\hat{S}_i$ are independent,
\begin{align}
    \Delta&= \E{Y}+ \frac{1}{2}\frac{\Esmall{{\hat{S}}^2}}{\Esmall{\hat{S}}}\eqnlabel{expcalc}\\
    &= \frac{2\left(1-\bar{\mu}^{z^*}(\mu z^*+\bar{\mu})-\bar{\mu}^{z^*+1}+ \bar{\mu}^{2z^*+1}(\mu z^*+\bar{\mu})\right)}{2\mu\left(1-{\bar{\mu}}^{z^*+1}\right)}\nonumber\\
    &\qquad + \frac{\mu^2\bar{\mu}^{z^*}(z^*+1) +\mu- {\bar{\mu}}^{z^*}z^*- \mu {\bar{\mu}}^{z^*}}{2\mu\left(1-{\bar{\mu}}^{z^*+1}\right)} \nonumber\\
    &\qquad + \frac{2\bar{\mu}-{\bar{\mu}}^{z^*+1}(2+z^*\mu )}{2\mu\left(1-{\bar{\mu}}^{z^*+1}\right)}\eqnlabel{avgagest}.
\end{align}
Computation of the expectation of $Y$, $\hat{S}$ and $\hat{S}^2$ is straightforward but tedious and details on computing \eqnref{areacal} and \eqnref{expcalc} are deferred to the Appendix~\ref{Appendixa}.

The frequency of MEC usage is
\begin{align}
    \bar{p}&= \lim _{\mathcal{T}_m\to\infty}\frac{\sum_z \mathbbm{1}\{Z_i>z^*\}}{\mathcal{T}_m}\\
    &=\lim _{\mathcal{T}_m\to\infty} \frac{\sum_ z \mathbbm{1}\{Z_i>z^*\}/m}{\mathcal{T}_m/m}=\frac{\prob{ Z_i>z^*} }{ \E{\hat{S}} }\\
     &=\frac{\mu {\bar{\mu}}^{z^*}}{1-{\bar{\mu}}^{z^*+1}}.
\end{align}

\subsection{Age Threshold Policy}
%
 This policy is defined by the age threshold $a^*$: 
\begin{align}
    u_n&=
    \begin{cases}
        1 & \text {if }a_n=a^*\\
        0& \text{otherwise}.
    \end{cases}
\end{align}
Thus when $a_n=a^*$, the existing update in service is dropped and the source immediately generates a new update for the MEC to process. The MEC processes this update in slot $n$ so the age at the start of the next slot is $a_{n+1}=1$.

We formulate the $(A,Z)$ Markov chain for this policy to find $\Delta$ and $\bar{p}$. 
Figure~\ref{fig:samplemarkovchain} shows a Markov chain for the age threshold policy with $a^*=3$.
\begin{figure}
    \centering
    \includegraphics[width=0.35\textwidth]{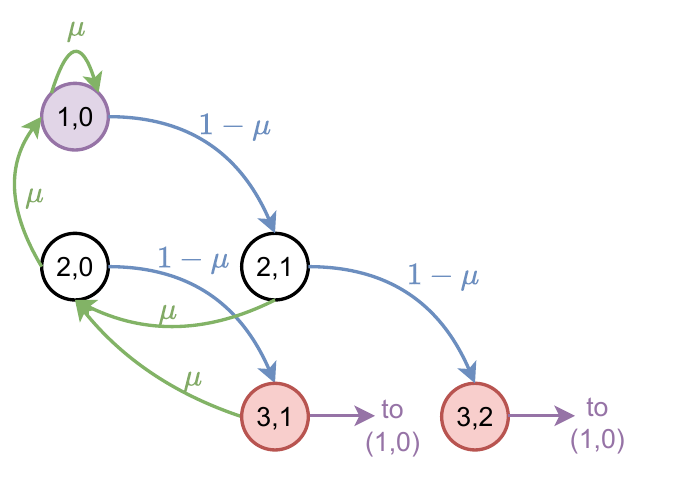}
    \caption{Markov chain for the age threshold policy with $a^*=3$. }
    \label{fig:samplemarkovchain}
\end{figure}
We find the stationary distribution $\pi(a,z)$ of the Markov chain numerically and use that to find $\E{A}=\sum_{a,z}a\pi(a,z)$.
The average age at the monitor for the continuous age function is then given by
\begin{align}
    \Delta = \E{A} + 1/
2.\end{align}
Similarly we can also find the MEC frequency, $\bar{p}=\sum_{z} \pi(a^*,z)$ from the stationary distribution $\pi(a,z)$ of the  Markov chain. 

\section{Age Threshold Based Optimal Policy}

We now turn to finding the optimal policy $f^*$ which minimizes the long term average cost in \eqnref{avgcost}. We first define the discounted cost $V_{f,\beta}(s)$ as follows:
    \begin{align}
    V_{f,\beta}(s)= \Eop_{f} \bracket{ \sum_{n=1}^\infty \beta^n C(s_n,u_n)
        \middle\vert 
        S_0 = s }, 
    \eqnlabel{discountedsennot}
    \end{align} 
where $s_0=(1,0)$ is the initial system state and $\beta\in(0,1)$ is the discounting factor. We define:
\begin{align}
    V_\beta(s)= \inf _{f} V_{f,\beta}(s).
\end{align}
If the value function $V_{\beta}(s)$ is finite, we have that $V_\beta(s)$ satisfies the following discounted cost optimality equation:
\begin{align}
    V_\beta(s)&= 
\min_{u\in\mathcal{U}} \Bigl(C(s,u)+ \beta \sum_{s'} P_{ss'}(u) V_\beta (s')\Bigr), \eqnlabel{disccost}
\end{align}
for all $s\in \mathcal{S}$ where $P_{ss'}(u)$ is the transition probability from state $s$ to $s'$ under action $u$. 
We can define a value iteration $V_{\beta,n}(s)$ recursively as:
\begin{align}
     &V_{\beta,n}(s) \nonumber\\
     &= \min_{u} \Bigl(C(s,u)+ \beta \sum_{s'} P_{ss'}(u) V_{\beta,n-1} (s')\Bigr)\\
    & = a+ 1/2+ \min \Bigl(\lambda + \beta V_{\beta, n-1}(1,0),\Bigr.\nonumber\\ &\left. \beta\mu V_{\beta, n-1}(z+1,0) + \beta (1-\mu) V_{\beta, n-1}(a+1, z+1)\right),\eqnlabel{valuefunction}
\end{align} for any $n\ge 1$ and $V_{\beta,0}(s)=0$ for all $s\in \mathcal{S}$ and $\beta\in (0,1)$.

Sennot's results show that for a finite $V_\beta(s)$ satisfying certain conditions~\cite[Assumptions~1-$3^*$]{sennott1989average} , 
there exists a differential cost function $h(s)$ and a stationary deterministic policy $f^*$ that satisfies the optimality equation 
\begin{align}
        g+ h(s)= \min_{u} \Bigl(C(s,u)+ \sum_{s'}P_{ss'}(u) h(s')\Bigr), 
\end{align}
for all $s\in \mathcal{S}$, where  $g$ is the average cost by following policy $f^*$, and $s'$ is the resulting  state after taking action $u$.  Details showing that Sennot's conditions, such as $V_\beta (s)$ being finite, hold for our problem are in Appendix~\ref{Appendixb}.

\subsection{Structure of the Optimal Policy}
\begin{definition}
    A $z$-dependent age threshold policy is a stationary deterministic policy of the MDP in which if the action of the policy for state $s=(a,z)$ is to use the MEC, then the action for state $s'=(a+1,z)$ is also to use the MEC. 
\end{definition}
For a $z$-dependent age threshold policy, for each $z$ there is an age threshold $\bar{a}_z$ given by
\begin{align}
    \bar{a}_z = \min \{a \colon  u(a,z) = 1\}\eqnlabel{abarz},
\end{align} for which we use the MEC.

In the theorem below  we show that the optimal policy $f^*$ for the MDP  has a $z$- dependent age threshold structure.

\begin{theorem}\label{agethresholdtheorem}
    There exists a $z$-dependent age threshold policy of the MDP that minimizes the long term average cost. 
\end{theorem}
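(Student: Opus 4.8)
The plan is to prove the threshold structure by establishing that the discounted value function is monotone in the age coordinate $a$, pushing this monotonicity through to the average-cost differential cost $h$, and then reading off the structure directly from the optimality equation. The whole argument rests on one feature visible in \eqnref{valuefunction}: the MEC branch $\lambda+\beta V_{\beta,n-1}(1,0)$ does not depend on $a$ at all, so a threshold in $a$ appears as soon as the competing (local) branch is shown to be nondecreasing in $a$.

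First I would show by induction on the value-iteration index $n$ that for every fixed $z$ the map $a\mapsto V_{\beta,n}(a,z)$ is nondecreasing. The base case $V_{\beta,0}\equiv 0$ is immediate. For the inductive step, inspect \eqnref{valuefunction}: the term $a+1/2$ is increasing in $a$; the MEC branch is constant in $a$; and in the local branch $\beta\mu V_{\beta,n-1}(z+1,0)+\beta(1-\mu)V_{\beta,n-1}(a+1,z+1)$ only the last summand depends on $a$, and it is nondecreasing in $a$ by the induction hypothesis applied at service value $z+1$. Since the minimum of an $a$-constant function and an $a$-nondecreasing function is nondecreasing, and adding the increasing term $a+1/2$ preserves this, $V_{\beta,n}(\cdot,z)$ is nondecreasing. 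Letting $n\to\infty$ (value iteration converges to $V_\beta$ under the finiteness conditions used for \eqnref{disccost}) gives that $a\mapsto V_\beta(a,z)$ is nondecreasing for each $z$.

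Next I would transfer this to the average-cost setting. By the Sennott conditions verified in Appendix~\ref{Appendixb}, there is a differential cost $h$ obtained as a limit of $V_\beta(s)-V_\beta(s_0)$ along a sequence $\beta\to 1$, together with a stationary deterministic optimal policy $f^*$ satisfying the average-cost optimality equation. Because nonstrict monotonicity is preserved under pointwise limits, $a\mapsto h(a,z)$ inherits the monotonicity and is nondecreasing for each $z$. Applying the optimality equation, using the MEC is (weakly) optimal in state $(a,z)$ exactly when $\lambda+h(1,0)\le \mu h(z+1,0)+(1-\mu)h(a+1,z+1)$; the left side is independent of $a$ and the right side is nondecreasing in $a$, so if this inequality holds at $(a,z)$ it also holds at $(a+1,z)$. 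Hence using the MEC at $(a,z)$ forces using the MEC at $(a+1,z)$, and with ties broken in favour of the MEC the minimizing policy is a $z$-dependent age threshold policy in the sense of the definition above, with thresholds $\bar a_z$ as in \eqnref{abarz}.

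The main obstacle is the transfer step rather than the monotonicity induction, which is routine. One must be sure that Sennott's framework genuinely delivers a stationary deterministic average-optimal policy and, crucially, that $h$ arises as a pointwise limit of the $V_\beta$ so the (nonstrict) age-monotonicity survives the limit; this is precisely what Appendix~\ref{Appendixb} secures. Once that is in hand, the threshold conclusion is essentially a one-line consequence of the $a$-independence of the MEC branch.
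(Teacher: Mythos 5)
Your proposal is correct, and its first half coincides with the paper's argument: the monotonicity induction you describe is exactly Lemma~\ref{Lemma2} of Appendix~\ref{Appendixb}, and the observation that the MEC branch of \eqnref{valuefunction} is $a$-independent while the local branch is nondecreasing in $a$ is precisely how the paper handles the discounted case (the paper phrases it as \eqnref{soptimal}--\eqnref{valmonn}: MEC optimal at $(a,z)$ forces MEC optimal at $(a+1,z)$). Where you genuinely diverge is the transfer to average cost. The paper stays at the level of \emph{policies}: it proves the threshold property for each $\beta$-discounted optimal policy and then invokes Sennott's result that discounted-optimal policies converge, along a subsequence $\beta_k\to 1$, to an average-cost optimal policy --- leaving implicit the (true, but unstated) fact that a pointwise limit of threshold policies on a countable state space inherits the threshold property. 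You instead stay at the level of \emph{value functions}: you push the monotonicity through to the differential cost $h$, using that Sennott constructs $h$ as a pointwise limit of $V_{\beta_k}(s)-V_{\beta_k}(s_0)$ (the bounds of Lemmas~\ref{Lemma 4} and \ref{Assumption 3} enable the diagonalization), and then read the threshold off the average-cost optimality equation directly, since $\lambda+h(1,0)$ is $a$-free and $\mu h(z+1,0)+(1-\mu)h(a+1,z+1)$ is nondecreasing in $a$. Your route buys a cleaner limit step --- monotonicity of $h$ survives pointwise limits trivially, and you never have to reason about convergence of argmins --- at the cost of leaning on the specific construction of $h$ in Sennott's proof rather than just her theorem statement; you should also note that under her Assumptions~1--$3^*$ the optimality equation may hold only as an inequality, but a stationary policy attaining the minimum (with your tie-breaking toward the MEC) is still average-cost optimal, so your read-off argument goes through unchanged. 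A last small point in your favor: your base case $V_{\beta,0}\equiv 0$ is stated correctly, whereas the paper's base case asserts $V_{\beta,1}(a,z)=0$, which is harmlessly inaccurate since $V_{\beta,1}(a,z)=a+1/2$ is already monotone.
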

\begin{proof}
    The proof technique is similar to that by Hsu et~al.~\cite{hsu2017age}. We start by showing the age-threshold policy is optimal for the discounted cost case given by \eqnref{discountedsennot} and generalize this for  the average cost case. 

    Consider the action of using the MEC in state $s=(a,z)$ to be optimal. Then from \eqnref{disccost} we have
    \begin{align}
        \beta \mu V_\beta(z+1,0)+ \beta (1-\mu) V_\beta(a+1,z+1)&\ge \lambda + \beta V_\beta (1,0)\eqnlabel{soptimal}. 
    \end{align}
    Now we need to show that the optimal action for state $s'=(a+1,z)$ is also to use the MEC. Hence we need to show
    \begin{align}
        \beta \mu V_\beta(z+1,0)+ \beta (1-\mu) V_\beta(a+2,z+1)&\geq \lambda + \beta V_\beta (1,0)\eqnlabel{sprimeoptimal}.
    \end{align}
    We have 
    \begin{align}
     &\beta \mu V_\beta(z+1,0)+ \beta (1-\mu) V_\beta(a+2,z+1)\nonumber\\
     &\qquad \geq \beta \mu V_\beta(z+1,0)+ \beta (1-\mu) V_\beta(a+1,z+1)\eqnlabel{monotonicityn}\\
    &\qquad\ge\lambda + \beta V_\beta (1,0) \eqnlabel{valmonn},
\end{align}
where \eqnref{monotonicityn} follows from the monotonicity property of the value function, and \eqnref{valmonn} follows from \eqnref{soptimal}.
This concludes the proof for the discounted case. 

We now need to generalize the structure of the optimal policy for the average cost case. Let $\{\beta_k\}_{k=1}^\infty$ be a sequence of  discount factors converging to $1$. Following Sennot~\cite{sennott1989average}, the optimal policy for minimizing the total  $\beta_k$-discounted cost converges to the policy $f^*_k$ for the average cost case. 
\end{proof}
\begin{lemma} \label{threshlemma1}
    If the action of the policy for state $s=(a,z)$ is to use the MEC, then the action for state $s=(a,z+1)$ is also to use the MEC. Also,  if the action of the policy for state $s=(a,z)$ is to use the local server, then the action for state $s=(a,z-1)$ is also to use the local server. 
\end{lemma}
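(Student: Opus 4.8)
The plan is to prove the first statement and obtain the second for free, since the two are contrapositives: writing $w=z+1$, the assertion ``MEC at $(a,z)$ implies MEC at $(a,z+1)$'' is logically equivalent to ``local at $(a,w)$ implies local at $(a,w-1)$.'' I therefore focus on showing that if the MEC action is optimal in state $(a,z)$ then it is optimal in $(a,z+1)$, and afterward pass to the average-cost regime exactly as in Theorem~\ref{agethresholdtheorem} by taking a sequence $\beta_k \to 1$.

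By \eqnref{soptimal}, MEC optimality at $(a,z)$ means $\beta\mu V_\beta(z+1,0)+\beta(1-\mu)V_\beta(a+1,z+1)\ge \lambda+\beta V_\beta(1,0)$, while MEC optimality at $(a,z+1)$ requires the same inequality with the local-action value replaced by $\beta\mu V_\beta(z+2,0)+\beta(1-\mu)V_\beta(a+1,z+2)$. Comparing term by term, it suffices to show $V_\beta(z+2,0)\ge V_\beta(z+1,0)$ and $V_\beta(a+1,z+2)\ge V_\beta(a+1,z+1)$. The first is monotonicity of $V_\beta$ in its age coordinate (the property already used in \eqnref{monotonicityn}); the second is monotonicity in the service coordinate $z$. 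Granting both, the left-hand side only grows, so the MEC-optimality inequality is preserved, and the argument mirrors the chain \eqnref{monotonicityn}--\eqnref{valmonn}.

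The substantive step is thus to establish that $V_\beta(a,z)$ is non-decreasing in $a$ and in $z$ simultaneously, which I would prove by induction on the value-iteration index $n$ in \eqnref{valuefunction}, with $V_{\beta,0}\equiv 0$ as the trivially monotone base case. For the inductive step, observe that the running cost $a+1/2$ is non-decreasing in $a$ and constant in $z$, the MEC action-value $\lambda+\beta V_{\beta,n-1}(1,0)$ is constant in both, and the local action-value $\beta\mu V_{\beta,n-1}(z+1,0)+\beta(1-\mu)V_{\beta,n-1}(a+1,z+1)$ is non-decreasing in both $a$ and $z$ by the induction hypothesis. Since the pointwise minimum of a constant and a non-decreasing function is non-decreasing, $V_{\beta,n}$ inherits monotonicity in each coordinate; letting $n\to\infty$ transfers this to $V_\beta$.

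I expect the $z$-monotonicity to be the main obstacle, as it cannot be read off directly. The subtlety is that $z$ appears in the \emph{age} slot of the term $V_{\beta,n-1}(z+1,0)$: a longer local service, should it complete, resets the monitor's age to $z+1$, so increasing $z$ raises that age argument. This couples $z$-monotonicity to age-monotonicity and forces the two to be carried through the induction jointly rather than proved in isolation.
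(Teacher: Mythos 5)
Your proposal is correct and takes essentially the same route as the paper: you reduce the claim to the inequality chain \eqnref{soptimal1}--\eqnref{valmon1} via monotonicity of $V_\beta$ in both coordinates, prove those monotonicities by induction on the value iteration \eqnref{valuefunction} while correctly flagging the key coupling that the paper's Lemma~\ref{Lemma3} handles via Lemma~\ref{Lemma2} (the argument $z+1$ sits in the \emph{age} slot of $V_{\beta,n-1}(z+1,0)$, so $z$-monotonicity needs $a$-monotonicity), and pass to the average-cost case through Sennott's $\beta_k\to 1$ limit exactly as in Theorem~\ref{agethresholdtheorem}. Your one refinement is obtaining the second statement for free as the contrapositive of the first (valid here since the policy is deterministic with a binary action space), where the paper instead omits it as a ``similar'' argument.
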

The proof is similar to that of Theorem~\ref{agethresholdtheorem} and is in Appendix~\ref{Appendixc}
From Lemma \ref{threshlemma1} we have 
\begin{align}
    \bar{a}_0\ge \bar{a}_1\ge \bar{a}_2\ge \ldots 
\end{align}
This is intuitive because as $z$ increases, we get a better age reduction by using the MEC and thus $\bar{a}_z$ is monotonically non-decreasing in $z$. 
\subsection{Relative Value Iteration}

To find the optimal policy $f^*$, we want to apply the relative value iteration (RVI) method~\cite{puterman1994markov}. 
But our state space is countably infinite and the cost is unbounded. We therefore truncate our state space as follows: if the age of an update goes above some specified tolerance level $a_{\textrm{max}}$, the scheduler then requests a new update from the source and sends it to the MEC. The state of the system then goes from $(a_{\textrm{max}},z)$ to $(1,0)$. The optimal policy for the truncated MDP and the original MDP are asymptotically identical when $a_{\textrm{max}}\to \infty$ if the truncated MDP is \textit{unichain}~\cite{puterman1994markov}. Our truncated MDP is unichain since any stationary policy $f$ for that chain has only one recurrent class, as the state $(1,0)$ (recurrent state) can be reached from all other states $(a,z)$. Under any policy, two consecutive service completions has non zero probability following which we are at state $(1,0)$.  

We therefore apply the RVI algorithm to the finite state approximation of our problem. In each iteration of the RVI algorithm we compute the value function for each state as follows:
\begin{align}
    V_{n}(s)&= \min_{u} \left[C(s,u)+ \sum_{s'} P_{ss'}(u) V_{n-1} (s') - V_{n-1}(1,0)\right]
\end{align}
where, $V_0(s)=0$ for all $s\in \mathcal{S}$. 

\begin{algorithm}
\caption{RVI algorithm given $\mu$ and $\lambda$}\label{algo1}
\hspace*{\algorithmicindent} \textbf{Input:} $\mu$, $\lambda$\\
\hspace*{\algorithmicindent} \textbf{Initialize:} $V(s)\gets 0$ for all states $s\in \mathcal{S}$;
\begin{algorithmic}[1]
\While{$1$}
\ForAll {$s\in \mathcal{S}$}
\If {$a=a_{\mathrm{max}}$}
\State $u\gets 1 $;
\Else
\State $u\gets \argmin_{u\in\mathcal{U}} (C(s,u) + \E{V(s')})$;
\EndIf
\State $V_{\mathrm{temp}}(s)\gets \min_{u} \left(C(s,u) + \E{V(s')}\right) - V(1,0)$;
\EndFor
\State  $V_{\mathrm{temp}}(s)\gets V(s)$ for all $s\in \mathcal{S}$.
\EndWhile\label{algo1}
\end{algorithmic}
\end{algorithm}

\begin{figure}
    \centering
    \includegraphics[width=0.35\textwidth]{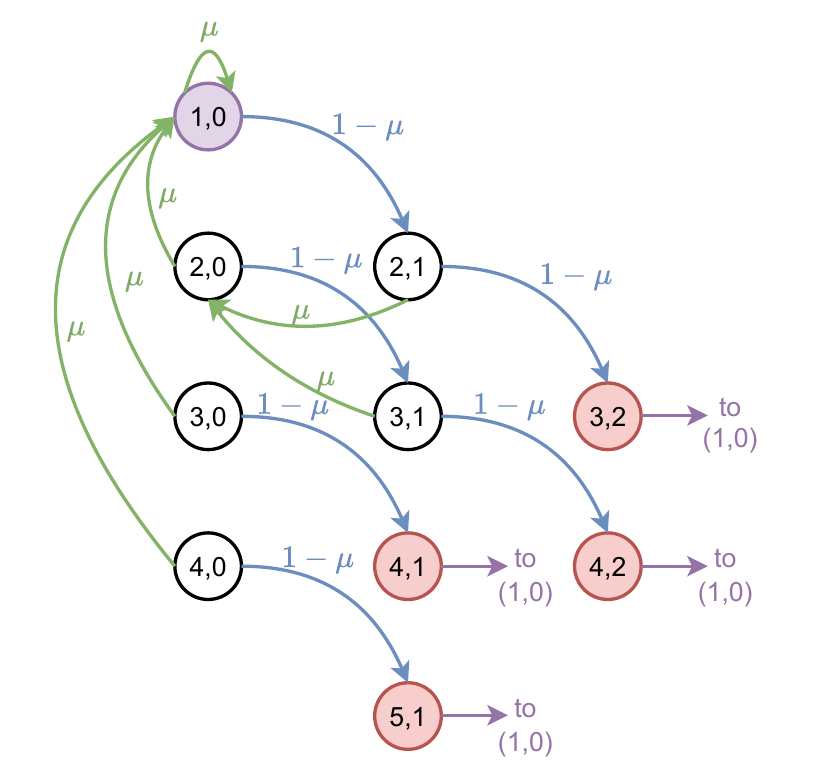}
    \caption{Markov chain of the optimal policy for $\mu=0.5$ and $\lambda=3$. The green lines  indicate  local service completion. The blue lines indicate that the update is still in service in the local server. The red circles indicate MEC service. $\bar{a}_1=4$, $\bar{a}_2=3. $
}
    \label{fig:optimalpolicy}
\end{figure}
Figure~\ref{fig:optimalpolicy} illustrates the optimal policy when the local server service rate, $\mu=0.5$, and $\lambda=3$. From Figure~\ref{fig:optimalpolicy} we see that the local processor keeps working on an update till the age reaches $\bar{a}_z$. 
Once the age equals $\bar{a}_z$, 
the MEC takes over and services a new update that it receives from the source immediately. 
We see that the age threshold is monotonically non-decreasing in $z$. In Figure \ref{fig:optimalpolicy}, 
$\bar{a}_1=4$ and $\bar{a}_2=3$.

\subsection{Simulations}

Figure~\ref{agemec} shows the variation of age with the frequency of MEC usage when local service rate, $\mu=0.01$. For the service threshold and age threshold policies, as $z^*$ or $a^*$ decreases respectively, we use the MEC more often, and the age decreases. 
As $a^*$, $z^*$, $\lambda$ approach infinity, the age blows up for each of the policies and is equivalent to only using the local server. For $(\bar{p}, \Delta)=(1,1.5)$  all policies are equivalent to only using the MEC in every slot. 
The service threshold policy gives us a better age- MEC frequency tradeoff than the age threshold policy. In fact, the service threshold policy is very close to the optimal policy. 
This is a useful observation since the service threshold policy is easier to implement in practice than the optimal policy, as we need to compute the relative values for each state $(a,z)$ recursively to find the optimal policy.
\begin{figure}[t]
    \centering
    \includegraphics[width=0.5\textwidth]{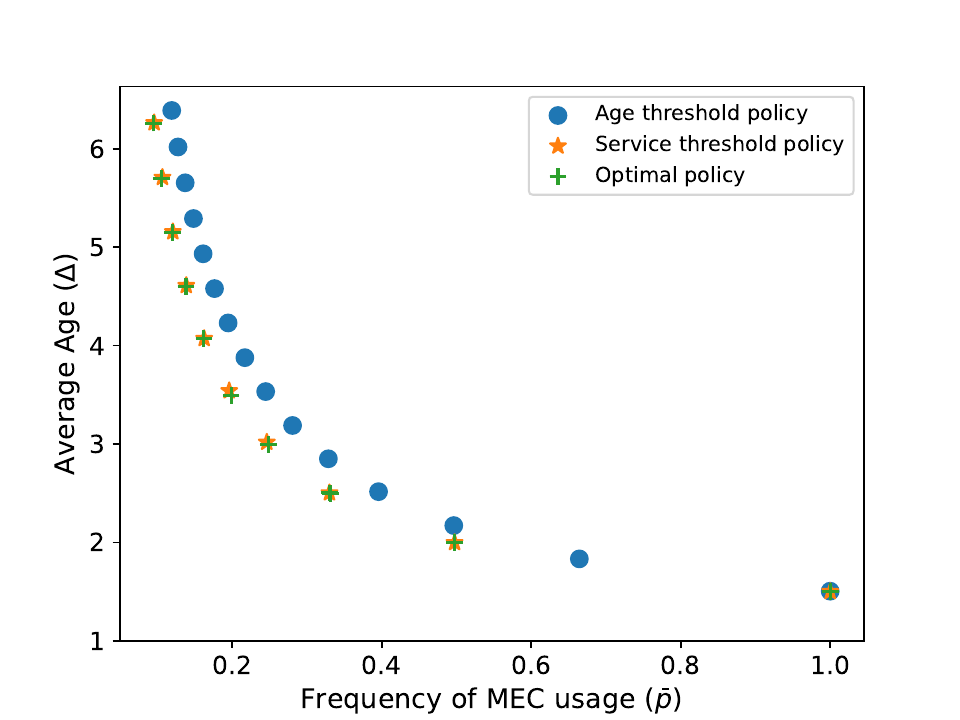}
    \caption{Age vs.~frequency of using the MEC for  $\mu=0.01$. $a^*$ varies from 1 to 15 for the age threshold policy, and $z^*$ varies from 0 to 9 for the service threshold policy. For the optimal policy, $\lambda\in (0,50)$. }
    \label{agemec}
\end{figure}

\section{Conclusions and Future Work}

In this paper we looked at a Mobile Edge Cloud System in which the MEC  helps a local device process updates in a timely way, but with a cost. We provided structural results for the optimal policy that minimizes the long term average cost: this policy applies an age threshold dependent on the amount of time the update has spent in service. Proving that the service threshold policy is a close approximation to the optimal policy is an interesting open question.

Additional future work includes: analyzing systems with multiple users, finding optimal policies (or the structure thereof) for more realistic modelling of the latency of the MEC which accounts for features in real systems such as round-trip time (RTT), multiple edge servers, or network issues such as contention. Ultimately, we hope to add quality of experience (QoE) considerations into our model for specific applications.


\section*{Acknowledgements}

This material is based upon work supported by the National Science Foundation under grant CNS-2148104 and is supported in part by funds from federal agency and industry partners as specified in the Resilient \& Intelligent NextG Systems (RINGS) program.

\clearpage

\IEEEtriggeratref{13}

\bibliographystyle{IEEEtran}
\bibliography{bibage.bib}

\newpage\clearpage

\section{Appendix}

\subsection{Service Threshold Age Analysis }\label{Appendixa}
The average age at the monitor is the area under the saw tooth function in Figure \ref{SampleServiceTimeThreshold}
normalized by the total time of observation $\mathcal{T}_m$.   The total area under the function is the sum of areas $\Tilde{Q}_0$, $Q_1$, \ldots, $Q_m$.
\begin{align}
    \Delta_{\mathcal{T}_m}&=\frac{Q_0+ \sum_{i=1}^m Q_i}{\mathcal{T}_m}\eqnlabel{avgstage}.
\end{align}
The area $Q_i$ can be calculated as the sum of the area of $2$ different entities:  
i) rectangle with height $Y_i$ and whose base connects the points $U_i$ and $U_{i+1}$, ii) isosceles triangle with width $\hat{S_i}$ and whose base connects the points $U_i$ and $U_{i+1}$. We then get
\begin{align}
    Q_i=Y_i\hat{S_i}+\frac{1}{2}\hat{S_i}^2 \eqnlabel{Qarea}.
\end{align}

Substituting \eqnref{Qarea} into \eqnref{avgstage} and rearranging some terms yields
\begin{align}
    \Delta_{\mathcal{T}_m}&= \frac{Q_0}{\mathcal{T}_m}+ \frac{m}{\mathcal{T}_m}\frac{\sum_{i=1}^m \left(Y_i\hat{S}_i+ \frac{1}{2}\hat{S_i}^2\right)}{m}.\eqnlabel{timeavgage}
\end{align}
As $\mathcal{T}_m\to \infty$, the first term in \eqnref{timeavgage} vanishes and we get
\begin{align}
    \lim_{\mathcal{T}_m\to \infty} \frac{m}{\mathcal{T}_m}= \frac{1}{\E{\hat{S}}}.
\end{align}
The summation term in \eqnref{timeavgage} is a sample average and it will converge to its stochastic average when $\mathcal{T}_m\to \infty$.
The average age at the monitor can now be calculated as follows:
\begin{align}
    \Delta= \lim_{\mathcal{T}_m\to \infty} \Delta_{\mathcal{T}_m} = \frac{\E{Y\hat{S}}+\frac{1}{2}\E{\hat{S}^2}}{\E{\hat{S}}}. \eqnlabel{avgageservice}
\end{align}

Since $Y_i$ and $\hat{S_i}$ are independent, it is sufficient for us to find $ \E{\hat{S}}$ and $\E{A}$. From \eqnref{effservice}, we find $ \E{\hat{S}}$ as follows
\begin{align}
    \E{\hat{S}}&=\E{Z_i|Z_i\le z^*}P(Z_i\le z^*)+ (z^*+1)P(Z_i>z^*). \eqnlabel{expshat}
\end{align}
Define, $\bar{\mu}=1-\mu$. Since $Z \sim \Geomv(\mu)$ we have
\begin{align}
    P(Z_i\le z^*)&= 1- {\bar{\mu}}^{z^*} \eqnlabel{eq1}\\
    P(Z_i> z^*)&= {\bar{\mu}}^{z^*}\eqnlabel{eq2}.
\end{align}
Going back to the first term in \eqnref{expshat} we have
\begin{align}
    &\E{Z_i|Z_i\le z^*}P(Z_i\le z^*)\nonumber\\
    &\qquad= \sum_{j=1}^{z^*} j P(Z_i=j|Z_i\le z^*)P(Z_i\le z^*)\\
    &\qquad= \sum_{j=1}^{z^*} j \frac{P(Z_i=j)}{P(Z_i\le z^*)}P(Z_i\le z^*)\\
    &\qquad=\sum_{j=1}^{z^*} j {\bar{\mu}}^{(j-1)}\mu\\
    &\qquad= \frac{1-{\bar{\mu}}^{z^*}}{\mu}- z^*{\bar{\mu}}^{z^*}\eqnlabel{ags},
\end{align}
where \eqnref{ags} follows from the sum of an arithmetico-geometric series. Substituting  \eqnref{eq1}, \eqnref{eq2} and \eqnref{ags} in \eqnref{expshat} we get
\begin{align}
    \E{\hat{S}}&= \frac{1-{\bar{\mu}}^{z^*}}{\mu}-z^*{\bar{\mu}}^{z^*}+ (z^*+1){\bar{\mu}}^{z^*}\\
    &=\frac{1-{\bar{\mu}}^{z^*+1}}{\mu} \eqnlabel{firstmoment}. 
\end{align}
Similarly we find $\E{Y}$ as follows
\begin{align}
    \E{Y}&=\E{Z_{i-1}|Z_{i-1}\le z^*}P(Z_{i-1}\le z^*)+ P(Z_{i-1}>z^*)\\
    &= \frac{1- {\bar{\mu}}^{z^*}(\mu z^*+\bar{\mu})}{\mu} \eqnlabel{aexp}.
\end{align}
We also have
\begin{align}
    \E{\hat{S}^2}&= \E{{Z_i}^2|Z_i\le z^*}P(Z_i\le z^*)+ {(z^*+1)}^2P(Z_i>z^*).\eqnlabel{expsquare}
\end{align}
Looking at the first term in \eqnref{expsquare} we have
\begin{align}
    &\E{{Z_i}^2|Z_i\le z^*}P(Z_i\le z^*)\nonumber\\
    &\qquad= \sum_{j=1}^{z^*} j^2 P(Z_i=j|Z_i\le z^*)P(Z_i\le z^*)\\
    &\qquad=\sum_{j=1}^{z^*} j^2 \frac{P(Z_i=j)}{P(Z_i\le z^*)}P(Z_i\le z^*)\\
        &\qquad=\sum_{j=1}^ {z^*} j^2 \bar{\mu}^{j-1} \mu \\
    &\qquad= \frac{2\bar{\mu}-\mu z^*{\bar{\mu}}^{z^*}(1+z^*\mu)-{\bar{\mu}}^{z^*+1}(2+z^*\mu)}{\mu^2} \nonumber\\
    &\qquad\qquad + \frac{1-{\bar{\mu}}^{z^*}}{\mu}- z^*{\bar{\mu}}^{z^*} \eqnlabel{expzsquared}.
\end{align}

Substituting \eqnref{expzsquared} and \eqnref{eq2} into \eqnref{expsquare}, we get
\begin{align}
    \E{{\hat{S}}^2}
    &=\frac{2\bar{\mu} -{\bar{\mu}}^{z^*+1}(2+z^*\mu )}{\mu^2} +\frac{1-{\bar{\mu}}^{z^*}z^*- {\bar{\mu}}^{z^*}}{\mu}\nonumber\\
    &\qquad + {\bar{\mu}}^{z^*}z^*+ {\bar{\mu}}^{z^*}.\eqnlabel{secondmoment}
\end{align}


\subsection{Proving Assumptions of Sennot~\cite{sennott1989average}}\label{Appendixb}
\begin{lemma}\label{Lemma1}
     For every state $s$ and discount factor $\beta$, the quantity $V_{\beta}(s)$ is finite.
\end{lemma}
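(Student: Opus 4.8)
The plan is to establish an explicit finite upper bound on $V_\beta(s)$ for each state $s=(a,z)$ by exhibiting a single \emph{suboptimal but easy-to-analyze} policy and bounding its discounted cost; since $V_\beta(s)$ is the infimum over all policies, any finite upper bound for a particular policy suffices. The natural candidate is the \textbf{MEC-only policy} that sets $u_n = 1$ in every slot where an action is taken, because under this policy the age at the monitor is trivially controlled: from any state the next state is deterministically $(1,0)$, after which the age stays bounded (it increases by $1$ each slot until the next forced MEC use, which under MEC-only occurs immediately). This makes the per-slot cost $C(s_n,u_n) = a_n + 1/2 + \lambda p_n$ uniformly bounded along the trajectory.

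First I would fix an arbitrary initial state $s=(a,z)$ and consider the deterministic policy $f$ that always offloads. Under $f$, the first action drives the state to $(1,0)$ with probability $1$, incurring cost $a + 1/2 + \lambda$ in the initial slot, and thereafter the age is bounded by a constant (at most some fixed value determined by how the MEC-only dynamics evolve, with per-slot MEC price $\lambda$). I would write
\begin{align}
V_\beta(s) \le V_{f,\beta}(s) = \Eop_{f}\Bigl[\sum_{n=1}^\infty \beta^n C(s_n,u_n) \,\Big|\, S_0 = s\Bigr].
\eqnlabel{vbetabound}
\end{align}
Because each $C(s_n,u_n) \le a_n + 1/2 + \lambda$ and the age $a_n$ under $f$ is bounded above by a constant $a + n$ in the worst case (and in fact by a small constant once the chain reaches $(1,0)$), the summands grow at most linearly in $n$ while being damped by $\beta^n$. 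The key analytic fact is that the geometric discounting dominates any such growth: $\sum_{n=1}^\infty \beta^n (c_1 + c_2 n) < \infty$ for any constants $c_1, c_2$ and $\beta \in (0,1)$, since $\sum n\beta^n = \beta/(1-\beta)^2 < \infty$.

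Carrying this out, I would bound the age sequence explicitly: under the MEC-only policy the age at the monitor satisfies $a_n \le a + n$ deterministically (a crude bound ignoring the resets to $1$), and $p_n \le 1$ always, so $C(s_n,u_n) \le (a+n) + 1/2 + \lambda$. Substituting into \eqnref{vbetabound} gives
\begin{align}
V_\beta(s) \le \sum_{n=1}^\infty \beta^n \bigl((a+n) + \tfrac{1}{2} + \lambda\bigr)
= \bigl(a + \tfrac{1}{2} + \lambda\bigr)\frac{\beta}{1-\beta} + \frac{\beta}{(1-\beta)^2},
\end{align}
which is finite for every fixed $\beta \in (0,1)$ and every state $s$. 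This completes the argument.

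I do not expect a serious obstacle here, since the existence of \emph{any} policy with summable discounted cost is enough and the MEC-only policy makes the bound transparent; the only care needed is to verify that the chosen policy yields an age sequence with at most polynomial (indeed linear) growth so that the geometric factor $\beta^n$ guarantees convergence. The mildly delicate point, if one wants a sharper constant rather than the crude linear bound, is to track the resets to state $(1,0)$ and argue the age is in fact uniformly bounded along the MEC-only trajectory; but for the purpose of proving mere finiteness the loose bound above is entirely adequate, so the heart of the proof is simply the observation that a bounded (or linearly growing) cost sequence against geometric discounting always sums to a finite value.
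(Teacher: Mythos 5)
Your proof is correct, and it takes a genuinely more elementary route than the paper while using the same witness policy. The paper proves the lemma by invoking Sennott's sufficient condition: finiteness of $V_\beta(s)$ follows once one exhibits a stationary policy inducing an ergodic chain with finite stationary average cost $\sum_s \pi_s C(s,u)$; it then observes that under the MEC-only policy the chain collapses to the single state $(1,0)$, whose cost is finite. You instead bound the discounted series directly: $V_\beta(s)\le V_{f,\beta}(s)$ for the MEC-only policy $f$, and since $u=1$ drives every state to $(1,0)$ deterministically, the per-stage cost is the constant $3/2+\lambda$ from the first transition onward (so the discounted cost is in fact exactly $(3/2+\lambda)\beta/(1-\beta)$, even sharper than your crude bound); your looser estimate via $a_n\le a+n$ and $\sum_{n\ge 1} n\beta^n=\beta/(1-\beta)^2$ is equally valid. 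Your version buys two things. First, since the linear growth bound $a_n\le a+n$ holds under \emph{every} policy (age increases by at most one per slot), your computation actually shows $V_{f,\beta}(s)<\infty$ for all policies $f$, which is strictly more than the lemma asserts. Second, it sidesteps a technicality the paper glosses over: under the MEC-only policy the induced chain is not irreducible on all of $\mathcal{S}$ --- every state other than $(1,0)$ is transient --- so Sennott's condition as quoted requires the additional (here trivial) check that the expected cost of reaching the recurrent class from any transient state is finite, whereas your direct geometric-series argument never needs it. The paper's route has the advantage of staying inside the Sennott framework it must verify anyway in Appendix~B, but as a standalone proof of finiteness yours is self-contained and complete.
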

\begin{proof}
    From Sennot~\cite{sennott1989average}, it is sufficient to show the MDP  has a stationary policy $f$ inducing an irreducible, ergodic Markov chain satisfying the following condition:
\begin{align}
    \sum _{s} \pi_{s} C(s,u)<\infty,
\end{align}
where $\pi_s$ is the steady-state distribution of the Markov chain.

Consider the following stationary policy: Use the MEC for all states $s$. The Markov chain reduces to a single state $(1,0)$ having steady state probability of $1$. 
Then 
\begin{align}
    \sum_{s} \pi_s C(s,u)&= c(1,0)= 1+ \lambda< \infty.
\end{align}
\end{proof}

We use \eqnref{minfact} to help prove Lemmas~\ref{Lemma2} and \ref{Lemma3}. For any functions $f'$,$g'$, $f$ and $g $, if $f'\ge f$ and $g'\ge g$ then 
\begin{align}
    \min(f',g')-\min(f,g)\ge 0. \eqnlabel{minfact}
\end{align}

Using $\eqnref{valuefunction}$ we show the following $2$ lemmas about $V_\beta(s)$. 
\begin{lemma}[Monotonicity of Value function]\label{Lemma2}
    For fixed $z$, $V_{\beta, n}(a,z)$ is monotonically non-decreasing in $a$.
\end{lemma}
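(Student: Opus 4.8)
The plan is to prove the statement by induction on the iteration index $n$, exploiting the recursion \eqnref{valuefunction} together with the elementary comparison fact \eqnref{minfact}. The base case $n=0$ is immediate, since $V_{\beta,0}(a,z)=0$ is constant and hence (weakly) non-decreasing in $a$ for every fixed $z$.

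For the inductive step, I would fix $z$ and compare the two states $(a,z)$ and $(a+1,z)$, taking as the inductive hypothesis that $V_{\beta,n-1}(\cdot,z')$ is non-decreasing in its first argument for every fixed $z'$. Writing out \eqnref{valuefunction} for both states, observe that the MEC-action cost $\lambda+\beta V_{\beta,n-1}(1,0)$ and the local-completion contribution $\beta\mu V_{\beta,n-1}(z+1,0)$ do not depend on $a$, so they are identical in the two expressions. The only $a$-dependent term is the local-continuation contribution $\beta(1-\mu)V_{\beta,n-1}(a+1,z+1)$: when the age is raised from $a$ to $a+1$ this becomes $\beta(1-\mu)V_{\beta,n-1}(a+2,z+1)$, and since both arguments share the same service level $z+1$, the inductive hypothesis gives $V_{\beta,n-1}(a+2,z+1)\ge V_{\beta,n-1}(a+1,z+1)$.

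I would then apply \eqnref{minfact} with the MEC-action terms playing the role of the equal pair $f'=f$ and the local-action terms playing the role of $g'\ge g$, concluding that the minimand for state $(a+1,z)$ is at least that for state $(a,z)$. Adding the linear per-slot cost, which increases from $a+1/2$ to $(a+1)+1/2$, yields $V_{\beta,n}(a+1,z)\ge V_{\beta,n}(a,z)$ and closes the induction. Finally, since the value-iteration iterates converge pointwise to $V_\beta$ and a pointwise limit of non-decreasing functions is non-decreasing, the same monotonicity holds for $V_\beta(a,z)$, which is exactly the property invoked at \eqnref{monotonicityn} in the proof of Theorem~\ref{agethresholdtheorem}.

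I do not anticipate a serious obstacle: the recursion is monotone and the service-level index shifts consistently from $z$ to $z+1$ in the continuation branch, so the inductive hypothesis applies at precisely the index needed. The only point requiring care is to keep the two $a$-independent branches fixed, so that \eqnref{minfact} is invoked with one of its function-pairs being an equality rather than a strict inequality; this is the reason the lemma gives only weak monotonicity.
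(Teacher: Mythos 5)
Your proposal is correct and follows essentially the same route as the paper's own proof: induction on the value-iteration index, noting that the MEC branch is $a$-independent (so $f'=f$) while the local-continuation branch satisfies $g'\ge g$ by the inductive hypothesis at service level $z+1$, and then invoking \eqnref{minfact} together with the increase in the immediate cost term. Your closing remark about passing monotonicity to the limit $V_\beta$ via pointwise convergence matches how the paper later uses \eqnref{VnapproachV}, so there is nothing to add.
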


\begin{proof}
    We prove this lemma by induction. For $n=1$, we have 
    \begin{align}
        V_{\beta, 1}(a,z)=0
    \end{align}
    for all $a\in \mathbb{N}$ and $z\in \mathbb{N}_{0}$.
    Thus the lemma holds for $n=1$. 
    Now let us say the lemma holds for some $n=k$. Then,
\begin{align}
    V_{\beta, k}(a,z)\le V_{\beta, k}(a+1,z)
\end{align}
for all $a\in \mathbb{N}$ and $z\in \mathbb{N}_{0}$.
Now we need to show the claim holds for $n=k+1$.  We have
\begin{align}
    &V_{\beta, k+1}(a+1,z)- V_{\beta, k+1}(a,z)\nonumber\\
    &\qquad=a+ 1/2+ 1 +  \min \left(\lambda + \beta V_{\beta, k}(1,0),\right.\nonumber\\
    &\qquad\left.\beta\mu V_{\beta, k}(z+1,0) + \beta (1-\mu) V_{\beta, k}(a+2, z+1)\right)\nonumber\\
    &\qquad\qquad - a- 1/2 - \min \left(\lambda + \beta V_{\beta, k}(1,0),\right.\nonumber\\
    &\qquad\left.\beta\mu V_{\beta, k}(z+1,0) + \beta (1-\mu) V_{\beta, k}(a+1, z+1)\right), \eqnlabel{valuefunctiondiff}
\end{align}
for all $a\in\mathbb{N}$ and $z\in \mathbb{N}_0$.
 
Let
\begin{align}
    f'&=\lambda + \beta V_{\beta, k}(1,0)\\
    f&=\lambda + \beta V_{\beta, k}(1,0)\\
    g'&=\beta\mu V_{\beta, k}(z+1,0) + \beta (1-\mu) V_{\beta, k}(a+2, z+1)\\
    g&=\beta\mu V_{\beta, k}(z+1,0) + \beta (1-\mu) V_{\beta, k}(a+1, z+1).
\end{align}
We see that $f'=f$ and
\begin{align}
    &g'-g\nonumber\\
    &= 1 + \beta (1-\mu) [V_{\beta, k}(a+2, z+1)-V_{\beta, k}(a+1, z+1)].
\end{align}
From the induction assumption we know $V_{\beta, k}(a+2, z+1)-V_{\beta, k}(a+1, z+1)\ge 0$. Thus, we have $f'=f$ and $g'\ge g$. Going back to \eqnref{valuefunctiondiff}, we have
\begin{align}
    &V_{\beta, k+1}(a+1,z)- V_{\beta, k+1}(a,z)\nonumber\\
    &\qquad=1 + \min (f',g')- \min(f,g)\geq 0 \eqnlabel{fact1}
\end{align}
for all $a\in \mathbb{N}$ and $z\in \mathbb{N}_0$. The inequality \eqnref{fact1} follows from \eqnref{minfact}. 

Thus the claim holds for $n=k+1$, completing the proof of Lemma \ref{Lemma2}.
\end{proof}

\begin{lemma}[Monotonicity of Value function]\label{Lemma3}
    For fixed $a$, $V_{\beta, n}(a,z)$ is monotonically non-decreasing in $z$.
\end{lemma}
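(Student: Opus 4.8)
The plan is to mirror the induction on $n$ used for Lemma~\ref{Lemma2}, now holding the age coordinate $a$ fixed and incrementing the service coordinate $z$. For the base case, observe that $V_{\beta,0}\equiv 0$, so from \eqnref{valuefunction} we get $V_{\beta,1}(a,z)=a+1/2$, which carries no dependence on $z$; the claim therefore holds trivially at the start of the induction. For the inductive step I would assume $V_{\beta,k}(a,z)\le V_{\beta,k}(a,z+1)$ for all $a\in\mathbb N$, $z\in\mathbb N_0$, and aim to establish the same inequality at $n=k+1$.

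First I would write out $V_{\beta,k+1}(a,z)$ and $V_{\beta,k+1}(a,z+1)$ using the recursion \eqnref{valuefunction} and form their difference. The additive $a+1/2$ term is common to both states and cancels, and inside the min the ``use the MEC'' branch, $\lambda+\beta V_{\beta,k}(1,0)$, is identical for the two states because it is independent of $z$. Setting $f'=f=\lambda+\beta V_{\beta,k}(1,0)$, everything then reduces to comparing the two ``use local server'' branches $g=\beta\mu V_{\beta,k}(z+1,0)+\beta(1-\mu)V_{\beta,k}(a+1,z+1)$ and $g'=\beta\mu V_{\beta,k}(z+2,0)+\beta(1-\mu)V_{\beta,k}(a+1,z+2)$, after which \eqnref{minfact} is applied exactly as in the proof of Lemma~\ref{Lemma2}.

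The crux is showing $g'\ge g$. The key observation is that incrementing $z$ displaces \emph{both} successor states appearing in the $u=0$ transition of \eqnref{state:transition}: the service-completion state moves from $(z+1,0)$ to $(z+2,0)$, while the no-completion state moves from $(a+1,z+1)$ to $(a+1,z+2)$. The first difference, $V_{\beta,k}(z+2,0)-V_{\beta,k}(z+1,0)\ge 0$, is monotonicity in the \emph{age} coordinate and hence follows from Lemma~\ref{Lemma2}; the second, $V_{\beta,k}(a+1,z+2)-V_{\beta,k}(a+1,z+1)\ge 0$, is monotonicity in the \emph{service} coordinate and is precisely the induction hypothesis. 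Since $\beta\mu\ge 0$ and $\beta(1-\mu)\ge 0$, these two facts give $g'-g\ge 0$, and \eqnref{minfact} with $f'=f$, $g'\ge g$ then yields $V_{\beta,k+1}(a,z+1)-V_{\beta,k+1}(a,z)\ge 0$, closing the induction.

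The main obstacle, and the structural difference from Lemma~\ref{Lemma2}, is that this induction is \emph{not} self-contained: the $z$-monotonicity step borrows $a$-monotonicity (Lemma~\ref{Lemma2}) as an external ingredient, because the action $u=0$ couples the two coordinates through the transition in \eqnref{state:transition}. Consequently the order of the argument matters: Lemma~\ref{Lemma2} must already be in hand before running this proof. Beyond the two monotonicity inputs and the elementary inequality \eqnref{minfact}, no delicate estimates are required.
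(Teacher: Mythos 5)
Your proof is correct and takes essentially the same route as the paper's: the same induction on $n$, the same observation that the MEC branch gives $f'=f$, the same pairing of Lemma~\ref{Lemma2} (for $V_{\beta,k}(z+2,0)\ge V_{\beta,k}(z+1,0)$) with the induction hypothesis (for $V_{\beta,k}(a+1,z+2)\ge V_{\beta,k}(a+1,z+1)$) to get $g'\ge g$, closed off by \eqnref{minfact}. If anything, yours is slightly tidier: your base case $V_{\beta,1}(a,z)=a+1/2$ is the correct evaluation of the recursion (the paper writes $V_{\beta,1}=0$), and you keep $f'-f=0$ exactly, whereas the paper's displayed expression for $f'-f$ contains a stray term inconsistent with its own definitions of $f$ and $f'$ (harmless, since $f'\ge f$ holds either way).
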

\begin{proof}
    We prove this lemma by induction. For $n=1$, we have 
    \begin{align}
        V_{\beta, 1}(a,z)= 0
    \end{align}
    for all $a\in \mathbb{N}$ and $z\in \mathbb{N}_{0}$.
    Thus the lemma holds for $n=1$. 
    Now let us say the lemma holds for some $n=k$. Then,
\begin{align}
    V_{\beta, k}(a,z)\le V_{\beta, k}(a,z+1)
\end{align}
for all $a\in \mathbb{N}$ and $z\in \mathbb{N}_{0}$.
Now we need to show the claim holds for $n=k+1$.  We have
\begin{align}
    &V_{\beta, k+1}(a,z+1)- V_{\beta,k+1}(a,z)\nonumber\\
    &\qquad=a+1/2+\min \left(\lambda + \beta V_{\beta, k}(1,0),\right.\nonumber\\
    &\qquad\left.\beta\mu V_{\beta, k}(z+2,0) + \beta (1-\mu) V_{\beta, k}(a+1, z+2)\right)\nonumber\\
    &\qquad\qquad - a -1/2- \min \left(\lambda + \beta V_{\beta, k}(1,0),\right.\nonumber\\
    &\qquad\left.\beta\mu V_{\beta, k}(z+1,0) + \beta (1-\mu) V_{\beta, k}(a+1, z+1)\right),
\end{align}
for all $a\in \mathbb{N}$ and $z\in \mathbb{N}_0$.
Let
\begin{align}
    f'&=\lambda + \beta V_{\beta, k}(1,0)\\
    f&=\lambda + \beta V_{\beta, k}(1,0)\\
    g'&=\beta\mu V_{\beta, k}(z+2,0) + \beta (1-\mu) V_{\beta, k}(a+1, z+2)\\
    g&=\beta\mu V_{\beta, k}(z+1,0) + \beta (1-\mu) V_{\beta, k}(a+1, z+1).
\end{align}
We see that
\begin{align}
    &f'-f= \beta (V_{\beta, k}(z+2,0)-V_{\beta, k}(z+1,0))\\
    &g'-g= \beta \mu (V_{\beta, k}(z+2,0)-V_{\beta, k}(z+1,0)) \nonumber \\
    &\qquad +\beta (1-\mu) [V_{\beta, k}(a+1, z+2)-V_{\beta, k}(a+1, z+1)].
\end{align}
 From Lemma \ref{Lemma2} we have $V_{\beta, k}(z+2,0)-V_{\beta, k}(z+1,0)\ge 0$. From the induction assumption we have  $V_{\beta, k}(a+1, z+2)-V_{\beta, k}(a+ 1, z+1)\ge 0$. Thus, we have  $f'\ge f$ and $g'\ge g$. Going back to \eqnref{valuefunctiondiff}, we have
\begin{align}
    V_{\beta, k+1}(a,z+1)- V_{\beta, k+1}(a,z)&= \min(f',g')- \min(f,g)\\
    &\geq 0 \eqnlabel{fact2}
\end{align}
for all $a\in \mathbb{N}$ and $z\in \mathbb{N}_0$. Inequality \eqnref{fact2} follows from $\eqnref{minfact}$. 

Thus the claim holds for $n=k+1$, completing the proof of Lemma \ref{Lemma3}.
\end{proof}
Let $h_\beta (s)= V_\beta (s)- V_{\beta}(s_0)$.
\begin{lemma}\label{Lemma 4}
       There exists a non-negative $N$ such that $-N\le h_\beta (s) $ for all  $s\in \mathcal{S}$ and $\beta$. 
\end{lemma}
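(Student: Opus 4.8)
The plan is to establish the bound with the trivial constant $N = 0$, by arguing that the reference state $s_0 = (1,0)$ is a global minimizer of the discounted value function $V_\beta$. Since $h_\beta(s) = V_\beta(s) - V_\beta(s_0)$, showing $V_\beta(s) \ge V_\beta(s_0)$ for every $s$ immediately yields $h_\beta(s) \ge 0 \ge -N$, uniformly in both $s$ and $\beta$.

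First I would exploit the two monotonicity results already proved for the value iterates. Fix any state $s = (a,z)$ with $a \ge 1$ and $z \ge 0$. By Lemma~\ref{Lemma2}, for fixed $z$ the iterate $V_{\beta,n}(\cdot,z)$ is non-decreasing in the age argument, so $V_{\beta,n}(a,z) \ge V_{\beta,n}(1,z)$. By Lemma~\ref{Lemma3}, for fixed age the iterate is non-decreasing in the service argument, so $V_{\beta,n}(1,z) \ge V_{\beta,n}(1,0)$. Chaining these gives $V_{\beta,n}(a,z) \ge V_{\beta,n}(1,0)$ for every $n$, every state, and every $\beta \in (0,1)$. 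This captures the intuition that $(1,0)$, having the freshest age and no accumulated service, is the most desirable state.

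Next I would pass to the limit $n \to \infty$. Because the one-stage cost is non-negative and the value iteration is initialized at $V_{\beta,0} \equiv 0$, standard discounted-MDP theory guarantees that $V_{\beta,n}(s)$ converges (monotonically) to $V_\beta(s)$ pointwise, so the inequality $V_{\beta,n}(a,z) \ge V_{\beta,n}(1,0)$ survives in the limit and gives $V_\beta(s) \ge V_\beta(s_0)$ for all $s$. Hence $h_\beta(s) \ge 0$, and the conclusion holds with $N = 0$.

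I expect the only genuine obstacle to be the limit step: one must confirm that the iterates $V_{\beta,n}$ indeed converge to $V_\beta$ on our countably infinite, unbounded-cost state space, so that Lemmas~\ref{Lemma2}–\ref{Lemma3}, which are stated for the finite-stage iterates, transfer to $V_\beta$ itself. Given Lemma~\ref{Lemma1}, which supplies finiteness of $V_\beta$, this convergence is standard. Moreover, since the resulting constant $N = 0$ does not depend on $\beta$, the uniformity over discount factors demanded by the statement is automatic.
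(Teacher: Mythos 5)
Your proof is correct and takes essentially the same approach as the paper: the paper likewise chains the two monotonicity lemmas (Lemmas~\ref{Lemma2} and \ref{Lemma3}) with the monotone convergence $V_{\beta,n}(s)\uparrow V_\beta(s)$ to conclude $V_\beta(a,z)\ge V_\beta(1,0)$ and take $N=0$. The only cosmetic difference is that you pass through the intermediate state $(1,z)$ whereas the paper passes through $(a,0)$.
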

\begin{proof}
    For each $\beta$, we have 
\begin{align}
    V_{\beta, n}(a,z)\uparrow V_\beta(a,z) \eqnlabel{VnapproachV},
\end{align}
for every $a\in \mathbb{N}$ and $z\in \mathbb{N}_0$. 
From Lemma \ref{Lemma3} and \eqnref{VnapproachV}, we have
\begin{align}
    V_\beta (a,z)- V_\beta (a,0)\geq 0 \eqnlabel{ineq1}
\end{align} and from Lemma \ref{Lemma2} and \eqnref{VnapproachV},  we have
\begin{align}
    V_\beta(a,0)- V_{\beta}(1,0)\geq 0. \eqnlabel{ineq2}
\end{align}
Adding \eqnref{ineq1} and \eqnref{ineq2}, we get
\begin{align}
    V_\beta(a,z)- V_\beta(1,0)\geq 0. \eqnlabel{ass2proof}
\end{align}
From \eqnref{ass2proof}, we see that the Lemma holds for $N=0$. 
\end{proof}
\begin{lemma} \label{Assumption 3}
     There exists a non-negative $M(s)$, such that $h_\beta (s)\le M(s)$ for every $s$ and $\beta$. For every $s$ there exists an action $u(s)$ such that $\sum _{s'} P_{ss'}(u(s)) M(s')<\infty$. In addition, $\sum_{s'} P_{ss'}(u) M(s')<\infty$ for all $s\in \mathcal{S}$ and $u\in \mathcal{U}$.
\end{lemma}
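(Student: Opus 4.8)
The plan is to exhibit an explicit dominating function $M(s)$ by bounding the optimal discounted cost from an arbitrary state against a suboptimal policy that forces an immediate return to the reference state $s_0=(1,0)$, and then to check the summability requirements directly from the transition law.

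First I would bound $V_\beta(a,z)$ from above. Taking the MEC action $u=1$ in state $(a,z)$ incurs immediate cost $C((a,z),1)=a+1/2+\lambda$ and drives the chain to $(1,0)$ with probability one, so the optimality equation \eqnref{disccost} gives $V_\beta(a,z)\le a+1/2+\lambda+\beta V_\beta(1,0)$. Subtracting $V_\beta(1,0)$ yields $h_\beta(a,z)\le a+1/2+\lambda-(1-\beta)V_\beta(1,0)$. Since every cost is non-negative we have $V_\beta(1,0)\ge 0$, and as $\beta\in(0,1)$ the residual term $-(1-\beta)V_\beta(1,0)$ is non-positive and may be dropped. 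This leaves the $\beta$-free bound $h_\beta(a,z)\le a+1/2+\lambda$, so I would take $M(a,z)=a+1/2+\lambda$, which is non-negative because $a\ge 1$ and $\lambda>0$.

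Next I would verify the summability conditions directly from \eqnref{state:transition}. Under $u=1$ the chain moves to $(1,0)$, so $\sum_{s'}P_{ss'}(1)M(s')=M(1,0)=3/2+\lambda<\infty$. Under $u=0$ the chain moves to $(z+1,0)$ with probability $\mu$ and to $(a+1,z+1)$ with probability $1-\mu$, so $\sum_{s'}P_{ss'}(0)M(s')=\mu(z+3/2+\lambda)+(1-\mu)(a+3/2+\lambda)$, which is finite for every finite $(a,z)$. This establishes the stronger ``in addition'' clause that $\sum_{s'}P_{ss'}(u)M(s')<\infty$ for all $s$ and $u$, and the existence of a suitable $u(s)$ follows immediately by choosing any action.

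The main obstacle is not the algebra but obtaining a bound on $h_\beta$ that is uniform in $\beta$: a naive comparison leaves a $\beta$-dependent $V_\beta(1,0)$ term that could in principle grow as $\beta\to 1$. The key observation that resolves this is that this residual enters with a non-positive coefficient and can simply be discarded, which needs only the non-negativity of the cost $C$. Together with Lemma~\ref{Lemma 4}, this completes the verification of Sennot's conditions.
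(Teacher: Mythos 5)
Your proof is correct, but it takes a genuinely different route from the paper's. The paper dispatches this lemma in a single line, by appealing to Sennott's general result~\cite{sennott1989average}: once Lemma~\ref{Lemma1} exhibits a stationary policy (always offload) inducing an irreducible ergodic chain with finite average cost, Sennott's machinery delivers the assumption with $M(s)$ taken to be the expected first-passage cost from $s$ to the reference state $s_0=(1,0)$ under that policy. You instead verify the assumption directly: the one-step deviation $u=1$ in \eqnref{disccost} gives $V_\beta(a,z)\le a+1/2+\lambda+\beta V_\beta(1,0)$, hence $h_\beta(a,z)\le a+1/2+\lambda-(1-\beta)V_\beta(1,0)\le a+1/2+\lambda$, and the summability clauses follow by inspection of \eqnref{state:transition}, since under $u=1$ the expectation is $M(1,0)=3/2+\lambda$ and under $u=0$ it is $\mu(z+3/2+\lambda)+(1-\mu)(a+3/2+\lambda)$. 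Note that your explicit $M(a,z)=a+1/2+\lambda$ is exactly the one-step first-passage cost to $(1,0)$ under the always-offload policy, so you have in effect unrolled the paper's citation in the special case where the reference state is reachable in a single slot; what your version buys is self-containedness, an explicit dominating function, and transparency about uniformity in $\beta$ (the residual $-(1-\beta)V_\beta(1,0)$ is discarded by sign, using only nonnegativity of the cost), whereas the paper's citation buys brevity and would survive model changes where $(1,0)$ is not reachable in one step. One dependency worth making explicit: the paper asserts \eqnref{disccost} only when $V_\beta$ is finite, so your bound implicitly rests on Lemma~\ref{Lemma1}; this is consistent with the paper's ordering, and if you wanted to remove even that dependency you could compare $V_\beta(s)$ directly with the discounted cost of the policy that offloads once and thereafter follows an optimal policy from $(1,0)$, which yields the same inequality from the definition of the infimum.
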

\begin{proof}
    The proof is same as that of Lemma\ref{Lemma1}~\cite{sennott1989average}. 
\end{proof}
\subsection{Proof of Lemma\ref{threshlemma1}}\label{Appendixc}
    We start by showing the lemma is true for the discounted cost case given by \eqnref{discountedsennot} and  then generalize this for the average cost case. 

    Let's consider the action of using the MEC in state $s=(a,z)$ to be optimal. Then we have from \eqnref{disccost}
    \begin{align}
        \beta \mu V_\beta(z+1,0)+ \beta (1-\mu) V_\beta(a+1,z+1)&\ge \lambda + \beta V_\beta (1,0) \eqnlabel{soptimal1}. 
    \end{align}
    Now we need to show that the optimal action for state $s'=(a,z+1)$ is also to use the MEC. Hence we need to show
    \begin{align}
        \beta \mu V_\beta(z+2,0)+ \beta (1-\mu) V_\beta(a+1,z+2)&\geq \lambda + \beta V_\beta (1,0)\eqnlabel{sprimeoptimal1}.
    \end{align}
    We have
    \begin{align}
     &\beta \mu V_\beta(z+2,0)+ \beta (1-\mu) V_\beta(a+1,z+2)\nonumber\\
     &\qquad\geq \beta \mu V_\beta(z+1,0)+ \beta (1-\mu) V_\beta(a+1,z+1)\eqnlabel{monotonicity1}\\
    &\qquad\ge\lambda + \beta V_\beta (1,0) \eqnlabel{valmon1},
\end{align}
where \eqnref{monotonicity1} follows from Lemma \ref{Lemma1} and Lemma \ref{Lemma2}, and \eqnref{valmon1} follows from \eqnref{soptimal1}.
This concludes the proof for the discounted case. 

We now need to generalize the structure of the optimal policy for the average cost case. Let $\{\beta_k\}_{k=1}^\infty$ be a sequence of  discount factors converging to $1$. From Sennot~\cite{sennott1989average}, the optimal policy for minimizing the total  $\beta_k$-discounted cost converges to the policy $f^*_k$ for the average cost case. 
The proof for the second statement in Lemma~\ref{threshlemma1} is similar, and therefore omit it. 

\end{document}